\newcommand{\be}{\begin{eqnarray}}
\newcommand{\ee}{\end{eqnarray}}
\newtheorem{theo}{Theorem}%[section]
\newtheorem{lemma}{Lemma}
\newtheorem{coro}{Corollary}
\newtheorem{defi}{Definition}
\begin{document}

\title
{Godunov Variables in Relativistic Fluid Dynamics}
%{Relativistic isentropic flows produce matter}

\author{\it Heinrich Freist\"uhler \footnotemark[1]} %\ ~ {\rm and}\ \  Blake Temple\footnotemark[2]}

\date{June 19, 2017}

% \large
\maketitle

\footnotetext[1]{Department of Mathematics, University of Konstanz, 78457 Konstanz, Germany}
%; Supported by DFG
%through its Excellence Grant to University of Konstanz.}

%\footnotetext[2]{Department of Mathematics, University of California, Davis, Davis CA, USA
%95616; Supported by NSF Applied Mathematics Grant Number DMS-070-7532.}

\begin{abstract}
This note presents %generating functions and 
Godunov variables and 4-potentials
for the relativistic Euler equations of barotropic fluids. 
The associated additional conservation/production law has different  
interpretations for different fluids. In particular it refers to \emph{entropy} 
in the case of thermobarotropic fluids, and to \emph{matter} in the case of 
isentropic fluids.
The paper also presents an explicit formula for the generating function 
of the Euler equations in the case of ideal gases.
It pursues ideas on symmetric hyperbolicity going back to 
Godunov (cf.\ also Lax and Friedrichs as well as Boillat) 
that were  elaborated as Ruggeri and Strumia's 
theory of convex covariant density systems,
\end{abstract}

\newpage
\section{Convex covariant density systems}
\setcounter{equation}0
In famous papers of 1954, 1957, and 1961,
Friedrichs presented the notion of symmetric hyperbolic systems of first-order
partial differential equations and started their mathematical theory \cite{Fr},
Lax defined the concept of hyperbolic systems of conservation laws \cite{Lx} and started their theory, 
and Godunov identified a class of systems of conservation laws that are symmetric hyperbolic when written 
in appropriate natural variables \cite{Go}. 
All three steps were strongly motivated from  physics, which in each case provides many important examples.
% for these collections. 
After related contributions by Lax and Friedrichs 
%on the symmetrization of hyperbolic systems in 1972 
\cite{LxFr} and Boillat \cite{Bo}, Ruggeri and Strumia formulated in 1981 a covariant version of 
first-order symmetric hyperbolicity as a theory of ``convex covariant density systems'' \cite{RS}, 
notably for application exactly to relativistic fluid dynamics.

\begin{defi}\label{GV4P}
One calls $\Upsilon=(\Upsilon_0,\ldots,\Upsilon_N)$ \emph{Godunov variables}
and a vector $X^\beta(\Upsilon)$ a \emph{4-potential} for a covariant causal system of conservation laws
\be\label{claws}
\frac{\partial}{\partial x^\beta} F^{a\beta}=0,\quad a=0,\ldots,N,
\ee
if 
\be\label{Fis}
F^{a\beta}=\frac{\partial X^\beta(\Upsilon)}{\partial \Upsilon_a} 
\ee
and  
\be\label{Hessianpositive}
\left(\frac{\partial^2 X^\beta(\Upsilon)}{\partial \Upsilon_a\Upsilon_g}T_\beta\right)_{a,g=0,\ldots,N}
\quad \text{is definite for all }T_\beta \text{ with }T_\beta T^\beta<0. 
\ee
\end{defi}
\begin{coro}
In the situation of Definition \ref{GV4P}, equations \eqref{claws} can be written 
as a quasilinear symmetric system 
\be
A^{a\beta g}(\Upsilon)\frac{\partial \Upsilon_g}{\partial x^\beta}=0, 
\ee
which has every time-like vector $T_\beta$ as a direction of hyperbolicity.
All smooth solutions of \eqref{claws} satisfy the additional 
conservation law
\be\label{addclaw}
\frac{\partial}{\partial x^\beta}\left(X^\beta(\Upsilon)-F^{a\beta}(\Upsilon)\Upsilon_a\right)=0.
\ee
\end{coro}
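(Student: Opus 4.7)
The plan is to derive all three assertions by direct chain-rule manipulation of the ansatz \eqref{Fis}, using the definiteness condition \eqref{Hessianpositive} only at the step where hyperbolicity is asserted.

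First, I would substitute \eqref{Fis} into \eqref{claws} and observe that $F^{a\beta}=\partial X^\beta/\partial\Upsilon_a$ depends on $x$ only through $\Upsilon(x)$; the chain rule then gives
\[
\frac{\partial F^{a\beta}}{\partial x^\beta}=\frac{\partial^2 X^\beta(\Upsilon)}{\partial\Upsilon_a\,\partial\Upsilon_g}\,\frac{\partial\Upsilon_g}{\partial x^\beta},
\]
so the natural choice of coefficients is $A^{a\beta g}(\Upsilon):=\partial^2 X^\beta/\partial\Upsilon_a\,\partial\Upsilon_g$. Symmetry in the pair $(a,g)$ is immediate from the equality of mixed partials, which is what ``symmetric system'' means here. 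To see that an arbitrary time-like $T_\beta$ is a direction of hyperbolicity, I would contract the principal part with $T_\beta$: the resulting symbol matrix is exactly the Hessian appearing in \eqref{Hessianpositive}, and its definiteness is the hypothesis, so Friedrichs hyperbolicity in the direction $T_\beta$ follows at once.

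For the supplementary law \eqref{addclaw}, I would expand
\[
\frac{\partial}{\partial x^\beta}\bigl(X^\beta(\Upsilon)-F^{a\beta}(\Upsilon)\Upsilon_a\bigr)
=\frac{\partial X^\beta}{\partial\Upsilon_a}\frac{\partial\Upsilon_a}{\partial x^\beta}
-\Upsilon_a\frac{\partial F^{a\beta}}{\partial x^\beta}
-F^{a\beta}\frac{\partial\Upsilon_a}{\partial x^\beta};
\]
the first and third terms cancel on account of \eqref{Fis}, while the middle term vanishes on any smooth solution of \eqref{claws}. The whole corollary is in the end a bookkeeping exercise, and there is no real obstacle: the only point deserving attention is to verify that the symbol matrix obtained after contraction with $T_\beta$ literally coincides with the Hessian of $X^\beta T_\beta$ as written in \eqref{Hessianpositive}, so that the definiteness assumption can be invoked without any sign or index adjustment. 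Content-wise, the three statements are three facets of the same Hessian structure of the 4-potential $X^\beta$.
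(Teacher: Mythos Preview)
Your argument is correct and essentially identical to the paper's: the paper likewise sets $A^{a\beta g}=\partial^2 X^\beta/\partial\Upsilon_a\partial\Upsilon_g$ and invokes \eqref{Hessianpositive} for hyperbolicity, and derives \eqref{addclaw} by the same chain-rule computation (written there as $\partial_\beta X^\beta=F^{a\beta}\partial_\beta\Upsilon_a=\partial_\beta(F^{a\beta}\Upsilon_a)$, which is just your three-term expansion reorganized). There is no substantive difference in method or content.
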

Note that symmetry and hyperbolicity follow with
$$
A^{a\beta g}
\equiv 
\frac{\partial^2 X^\beta}{\partial \Upsilon_a\partial\Upsilon_g},  \quad a,g=0,\ldots,N,
$$
and equation \eqref{addclaw} is a consequence of
\be\label{confirmaddclaw}
\frac{\partial X^\beta}{\partial x^\beta}
=
\frac{\partial X^\beta}{\partial \Upsilon_a}\frac{\partial \Upsilon_a}{\partial x^\beta}
=
F^{a\beta}\frac{\partial \Upsilon_a}{\partial x^\beta}
=
\frac{\partial}{\partial x^\beta}\left(F^{a\beta}\Upsilon_a\right).
\ee
The contents of Definition 1 and Corollary 1 closely follow \cite{Fr,Lx,Go,LxFr,Bo,RS}.\footnote{The 
only difference from \cite{RS} is the fact that in consistency with causality, we require the 
definiteness in \eqref{Hessianpositive} (and obtain hyperbolicity)  for \emph{all}\ time-like $T_\beta$  
instead of $\emph{some}$.}

The relativistic dynamics of perfect fluids is governed by the conservation 
equations for energy-momentum, 
\be\label{energmomentumconservation}
{\partial \over \partial x^\beta}\left(T^{\alpha\beta}\right)
=0,
\ee
and matter,
\be\label{matterconservation} 
{\partial \over \partial x^\beta}\left(nU^{\beta}\right)=0,
\ee
where 
\be\label{invstresstensor}
T^{\alpha\beta}=(\rho+p)U^{\alpha}U^{\beta}+p\,g^{\alpha\beta} 
\ee
and $nU^\beta$ are the energy-momentum tensor and the matter current. 
Here $U^\alpha$ denotes the 4-velocity 
of the fluid, and the fluid itself %is given by its equation of state that 
is specified by its specific internal energy $e$ as a function of matter  
density $n$ and specific entropy $\sigma$,
from which internal energy $\rho$ and pressure $p$ derive, 
\be\label{generalfluid}
e=e(n,\sigma),\quad
\rho=\rho(n,\sigma)=ne(n,\sigma),\quad
p=p(n,\sigma)=n^2e_n(n,\sigma).
\ee
The five partial differential equations \eqref{energmomentumconservation}, \eqref{matterconservation} 
constitute a system for `five fields': the 4-velocity (three degrees of freedom, as it is 
constrained by unitarity, $U^\alpha U_\alpha=-1$) and two thermodynamic variables (such as $n$ 
and $\sigma$).

Ruggeri and Strumia have shown in \cite{RS} that for very general fluids \eqref{generalfluid},
% assuming that 
%the Hessian of $\hat e(\tau,\sigma)\equiv e(1/\tau,\sigma)$ is  
%strictly positive and sound is not faster than light, 
%$s^2\equiv p_n(n,\sigma)/\rho(n,\sigma)\le 1$, 
very similarly to the non-relativistic situation (cf.\ \cite{Go}), the quantities
\be\label{GodVarGen}
\psi_\alpha=\frac{U^\alpha}\theta,\ \alpha=0,1,2,3,
\quad\text{and}\quad 
\psi_4=\frac\mu\theta
\ee
are Godunov variables
for the system 
\eqref{energmomentumconservation}, \eqref{matterconservation},
where 
$$\theta=e_\sigma(n,\sigma)\quad\text{and}\quad \mu=\frac{\rho+p}n-\theta\sigma$$
denote temperature and chemical potential, and the additional  
conservation law\footnote{Cf.\ Section 3 for solutions with shock waves.}
is that for the entropy, % \eqref{entropyconservation}. 
\be\label{entropyconservation}
{\partial \over \partial x^\beta}\left(n\sigma U^{\beta}\right)=0.
\ee

The purposes of this note are to give related particular treatments for 
(a) barotropic fluids, (b) isentropic fluids, and to (c) derive an explicit formula 
for the 4-potential in the case of ideal gases.

\section{Godunov\! variables\! for\! barotropic\! and\! isentropic\! fluids}
\setcounter{equation}0
A perfect fluid is barotropic and causal if there is a one-to-one relation between 
internal energy and pressure,
\be\label{barotropic}
p=\hat\rho(p),  
\ee
which satisfies
\be\label{SSTL}
\hat\rho'(p)\ge1.
\ee
For barotropic fluids, the conservation laws \eqref{energmomentumconservation}
for energy-momentum are a self-consistent system that determines $\rho$ and $p$ (for given 
initial data) without reference to $n,\theta$ or $s$; this may be called a four-field theory,
as now only one thermodynamic variable (which can be taken to be $\rho$ or $p$) is needed
in addition to $U^\alpha$. It is probably due also to this nice reduction in the number of variables
that barotropic fluids are widely used in physics  (cf.\ \cite{HE,W,C2,ST}).

Remarkably, these four-field theories permit an independent 
analogue of the abovementioned results for five-field theories.
\begin{theo}\label{ThmBaro}
(i) To every causal barotropic fluid \eqref{barotropic},\eqref{SSTL},
there exist essentially unique\footnote{$f$ is unique up to a positive multiplicative constant.
Once $f$ is normalized, say by $f(1)=1$, $X$ is determined up to an additive constant.}
functions f=f(p) and 
$X(\Upsilon)=\hat X(f)$ such that the quantities 
$$
\Upsilon_\alpha=\frac{U_\alpha}f,%\quad\text{with }f=\int_1\frac{dp}{\hat\rho(p)+p},
$$
are Godunov variables and 
$$
X^\beta(\Upsilon)=
\frac{\partial X(\Upsilon)}{\partial \Upsilon_\beta} 
=
f^3\hat X'(f)\Upsilon^\beta,\quad f=\left(-\Upsilon_\alpha \Upsilon^\alpha\right)^{-1/2} 
$$
is a 4-potential which together represent \eqref{energmomentumconservation} in the
form  \eqref{claws}, \eqref{Fis} with \eqref{Hessianpositive}.\\
(ii) %The corresponding additional conservation law \eqref{addclaw} is given by 
The corresponding additional conservation law \eqref{addclaw} is given by 
\be\label{pseudomatterconservation}
{\partial \over \partial x^\beta}\left(\nu U^{\beta}\right)=0,
\ee
with
\be\label{nuis}
\nu=\nu(p)\equiv\frac{\hat\rho(p)+p}{f(p)}=\frac1{f'(p)}.
\ee
\end{theo}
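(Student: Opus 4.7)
The plan is to determine $f$ and $\hat X$ by imposing $F^{\alpha\beta}=T^{\alpha\beta}$, verify the symmetric-hyperbolic structure, and then read off the additional conservation law from Corollary 1. A key technical identity used throughout is the chain-rule computation
$$\frac{\partial f}{\partial\Upsilon_\alpha}=f^3\,\Upsilon^\alpha,\qquad f=(-\Upsilon_\gamma\Upsilon^\gamma)^{-1/2}.$$

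To construct $f$ and $\hat X$, I set $X(\Upsilon)=\hat X(f)$ and compute $X^\beta=\partial X/\partial\Upsilon_\beta=f^3\hat X'(f)\,\Upsilon^\beta$, which already reproduces the formula in the statement. Differentiating once more and substituting $\Upsilon^\beta=U^\beta/f$ gives
$$F^{\alpha\beta}=\bigl[f^3\hat X'(f)\bigr]'(f)\cdot f\,U^\alpha U^\beta+f^3\hat X'(f)\,g^{\alpha\beta}.$$
Matching this against $T^{\alpha\beta}=(\rho+p)U^\alpha U^\beta+p\,g^{\alpha\beta}$ forces $f^3\hat X'(f)=p$, which determines $\hat X$, and together with the first relation the separable ODE $f'(p)=f/(\rho+p)$, which determines $f$ up to a positive multiplicative constant. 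This delivers $F^{\alpha\beta}=T^{\alpha\beta}$ and the uniqueness claim in (i), while simultaneously supplying the identity $\nu=(\rho+p)/f=1/f'(p)$ that will be used in (ii).

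The main obstacle is verifying \eqref{Hessianpositive}. Computing $M^{ag}(T)\equiv T_\beta\,\partial^2X^\beta/(\partial\Upsilon_a\partial\Upsilon_g)$ directly from $X^\beta T_\beta=p\,\Upsilon\!\cdot\! T$ and simplifying with the ODE, one arrives at
$$M^{ag}=(\rho+p)f\bigl[(\hat\rho'(p)+3)(U\!\cdot\! T)U^aU^g+(U\!\cdot\! T)g^{ag}+U^aT^g+U^gT^a\bigr].$$
For a future-directed timelike $T$, decompose $T^\alpha=a\,U^\alpha+S^\alpha$ with $S\cdot U=0$, so $a=-U\!\cdot\! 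T>0$ and $S\cdot S<a^2$; test against $v^\alpha=b\,U^\alpha+W^\alpha$ with $W\cdot U=0$. A short calculation then reduces the quadratic form to
$$M^{ag}v_av_g=-(\rho+p)f\bigl[a\hat\rho'(p)\,b^2+2b\,(S\!\cdot\! W)+a\,(W\!\cdot\! W)\bigr].$$
Negative-definiteness follows because, viewed as a quadratic in $b$, the discriminant $(S\!\cdot\! W)^2-a^2\hat\rho'(p)(W\!\cdot\! W)$ is strictly negative for $W\neq 0$: by Cauchy-Schwarz on the spacelike complement of $U$, $(S\!\cdot\! W)^2\le(S\!\cdot\! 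S)(W\!\cdot\! W)$, and $S\cdot S<a^2\le a^2\hat\rho'(p)$ by \eqref{SSTL}; for $W=0$ the form reduces to $-a\hat\rho'(p)b^2$, which is strictly negative for $b\neq 0$. This is the computational heart of the argument.

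Finally, (ii) follows at once from Corollary 1. Using $X^\beta=pU^\beta/f$ together with $T^{\alpha\beta}U_\alpha=-\rho U^\beta$ yields
$$X^\beta-F^{\alpha\beta}\Upsilon_\alpha=\frac{p\,U^\beta}{f}+\frac{\rho\,U^\beta}{f}=\frac{\rho+p}{f}\,U^\beta,$$
which is \eqref{pseudomatterconservation} with $\nu=(\rho+p)/f$, and the alternative formula $\nu=1/f'(p)$ is immediate from the ODE obtained in the construction step.
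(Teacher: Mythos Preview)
Your construction of $f$ and $\hat X$ and your derivation of the additional law in~(ii) follow the paper's proof essentially line by line: both match $\partial^2 X/(\partial\Upsilon_\alpha\partial\Upsilon_\beta)$ against $T^{\alpha\beta}$ to obtain $f^3\hat X'(f)=p$ and the separable ODE $f'/f=1/(\hat\rho+p)$, and both compute $X^\beta-T^{a\beta}\Upsilon_a=\nu U^\beta$.

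The one genuine addition in your argument is the explicit verification of the definiteness condition~\eqref{Hessianpositive}, which the paper's proof simply does not address. Your third-derivative formula
\[
M^{ag}=(\rho+p)f\bigl[(\hat\rho'(p)+3)(U\!\cdot\!T)U^aU^g+(U\!\cdot\!T)g^{ag}+U^aT^g+U^gT^a\bigr]
\]
is correct (it follows by differentiating $T^{a\beta}$ in $\Upsilon_g$ using $\partial U^\alpha/\partial\Upsilon_g=f(U^gU^\alpha+g^{\alpha g})$ and $\partial p/\partial\Upsilon_g=(\rho+p)fU^g$), and your reduction via the orthogonal decomposition $T=aU+S$, $v=bU+W$ together with Cauchy--Schwarz on the spacelike complement and the causality bound $\hat\rho'(p)\ge 1$ is a clean way to obtain definiteness for every timelike $T$. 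This fills a gap the paper leaves open; what you gain is a self-contained proof that the variables are Godunov variables in the full sense of Definition~\ref{GV4P}, at the price of a page of linear-algebra bookkeeping that the paper elides.
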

\begin{proof}
(i) Consider a momentarily arbitrary function of the form 
$$
X(\Upsilon)=\hat X(f)\quad\text{with }f=\left(-\Upsilon_\alpha \Upsilon^\alpha\right)^{-1/2}. 
$$
As %$h^{-2}=-\Upsilon_\alpha\Upsilon^\alpha$ and 
$$
df=f^3\Upsilon^\alpha d\Upsilon_\alpha,
$$
we have 
$$
X^\beta(\Upsilon)=
\frac{\partial X}{\partial \Upsilon_\beta}
=
\pi(f)\Upsilon^\beta\quad\text{with}\quad\pi(f)\equiv f^3\hat X'(f)
$$ 
and 
$$
\frac{\partial^2 X}{\partial \Upsilon_\alpha\partial\Upsilon_\beta}=
f^3\pi'(f)\Upsilon^\alpha\Upsilon^\beta+\pi(f)g^{\alpha\beta}
=
f\pi'(f)U^\alpha U^\beta+\pi(f)g^{\alpha\beta}
$$ 
Now, equation \eqref{Fis}, here
\be\label{T}
T^{\alpha\beta}=\frac{\partial^2 X}{\partial \Upsilon_\alpha\partial\Upsilon_\beta}.
\ee
is equivalent to
$$
\pi(f(p))=p\quad\text{and}\quad f(p)\pi'(f(p))=\hat\rho(p)+p,
$$
and this is equivalent to 
\be\label{odef}
\frac{f'(p)}{f(p)}=\frac1{\hat\rho(p)+p}\quad\text{and}\quad \pi=f^{-1}. 
\ee
From $\pi$, one determines $\hat X$ as
$$
\hat X(f)=\int (1/f)^{3}\pi(f) df.
$$
(ii) 
This follows from 
$$
X^\beta(\Upsilon)-T^{a\beta}(\Upsilon)\Upsilon_a
=
\pi'(f)U^\beta.
$$
\end{proof}
Among the barotropic fluids, \emph{isentropic} fluids are
characterized by the property that their specific internal energy, 
internal energy, and pressure do not depend on an entropy, but solely on the matter density $n$, 
% (and there is no specific entropy $\sigma$), 
\be\label{eisentropic}
e=e(n)>0,\quad \rho=\rho(n)=ne(n)>0,\quad p=p(n)=n^2e'(n)>0.
\ee
For isentropic fluids, also the specific enthalpy $h$ is a function of $n$ alone,
\be
h=\frac{\rho+p}n\equiv h(n).
\ee
\begin{lemma}
For isentropic fluids \eqref{eisentropic}, the index $f$ can be chosen as
the specific enthalpy, 
\be\label{fish}
f(p(n))=h(n).  %\quad\text{and}\quad \nu(p(n))=n,
\ee
%and the additional conservation law \eqref{pseudomatterconservation}
%is that of matter, \eqref{matterconservation}.
\end{lemma}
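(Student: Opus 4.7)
The plan is to verify that the choice $f=h$ satisfies the ODE characterization of $f$ derived in the proof of Theorem \ref{ThmBaro}, namely
$$
\frac{f'(p)}{f(p)}=\frac1{\hat\rho(p)+p},
$$
(equation \eqref{odef} of the paper). Since $f$ is determined up to a positive multiplicative constant by this ODE, producing any positive solution of the same ODE identifies it (up to the allowed constant) with $f$.

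First I would rewrite the right-hand side in isentropic variables: using \eqref{eisentropic}, $\hat\rho(p(n))+p(n)=ne(n)+n^2e'(n)=n\,h(n)$, so the ODE becomes $f'(p)/f(p)=1/(nh(n))$ along the curve $p=p(n)$. Next I would compute the left-hand side for the candidate $f(p(n))=h(n)$ by a single application of the chain rule. From $h(n)=e(n)+ne'(n)$ we get $h'(n)=2e'(n)+ne''(n)$, and from $p(n)=n^2e'(n)$ we get $p'(n)=2ne'(n)+n^2e''(n)=n\,h'(n)$. Therefore
$$
\frac{dh}{dp}=\frac{h'(n)}{p'(n)}=\frac1n,
\qquad\text{so}\qquad
\frac{f'(p)}{f(p)}=\frac{1}{n\,h(n)},
$$
which matches the required right-hand side.

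It remains to check the normalization/admissibility conditions implicit in Theorem \ref{ThmBaro}: namely that $f$ is positive and that \eqref{SSTL} is consistent with the construction. Positivity of $h$ is immediate from \eqref{eisentropic}, and the identity $dh/dp=1/n>0$ shows that $h$ is a strictly increasing function of $p$, so \eqref{fish} indeed defines $f$ as a well-behaved positive function of $p$. Hence $f=h$ is an admissible choice, which by the essential uniqueness statement of Theorem \ref{ThmBaro} is the assertion of the lemma.

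I do not expect any genuine obstacle; the only point requiring (slight) care is the passage between derivatives in $n$ and derivatives in $p$, which is clean because the identity $p'(n)=n\,h'(n)$ is exactly the Gibbs-type relation $dh=dp/n$ in disguise.
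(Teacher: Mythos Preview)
Your proof is correct and follows essentially the same route as the paper: you compute $p'(n)=n\,h'(n)$, deduce $dh/dp=1/n$, and verify that this is exactly the ODE \eqref{odef}$_1$ for $f$ (the paper writes this as $dh/dp=1/n=h/(\rho+p)$ in one line). The extra checks you include on positivity and monotonicity are fine but not strictly needed for the lemma as stated.
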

\begin{proof}
From
%  p=n^2 e'
$$
p'(n)=2ne'(n)+n^2e''(n)=nh'(n)
$$
one obtains
%In the isentropic case,
$$
\frac{dh}{dp}=\frac{h'n)}{p'(n)}=\frac1n=\frac h{\rho+p},
$$ 
which means that $f$ defined through $f(p(n))=h(n)$ satisfies the differential equation 
\eqref{odef}$_1$.
\end{proof}

{\bf Remark 1.} The quantity $f$ identified by Theorem \ref{ThmBaro}
is the \emph{index}\ of a barotropic fluid that 
Lichnerowicz has introduced in \cite{Li} as
$$ %\be\label{defindex}
f=f(p)\equiv\exp\int\frac{dp}{\hat\rho(p)+p}.
$$ %\ee
Lichnerowicz worked not with the Godunov variable $U^\alpha/f$, but with the `dynamic' velocity
$fU^\alpha$.

{\bf Remark 2.} 
The integrability of the 4-potential,
here the existence of $X(\Upsilon)=\hat X(f)$ such that 
$X^\beta(\Upsilon)=\partial X(\Upsilon)/\partial \Upsilon_\beta$, can be viewed as a consequence
of the symmetry of the energy-momentum tensor $T^{\alpha\beta}$ \cite{GL}. 
On the other hand, already the naturally needed isotropy  of the mapping $\Upsilon^\beta\mapsto X^\beta$ induces
the form $X^\beta=\tilde X(\Upsilon_\alpha\Upsilon^\alpha)\Upsilon^\beta$, thus the integrability,
and one may conversely re-understand the symmetry of  $T^{\alpha\beta}$ as a consequence of 
the isotropicity requirement for fundamental equations. 

{\bf Remark 3.}
For isentropic fluids, Eq.\ \eqref{odef} reads 
\be\label{piandh}
\pi(h(n))=p(n)
\quad\text{and}\quad 
\pi'(h(n))=n,
\ee
which implies
$$
\rho(n)+\pi(h)=nh,
$$
i.e., density and pressure are Legendre conjugate, with matter density and enthalpy,
$$
n=\pi'(h)\quad\text{and}\quad h=\rho'(n),
$$
as dual variables. 
For instance, the massless ($m=0$) isentropic $\gamma$-law gases $e(n)=\frac1\gamma n^{\gamma-1}$ 
have 
$$
\rho(n)=\frac1\gamma n^\gamma\quad\text{and}\quad \pi(h)=\frac1\delta h^\delta \quad\text{with }
\gamma+\delta=\gamma\delta,
$$
and a beautiful example is given by the limiting `stiff' 
fluid, 
that corresponds to $\gamma=\delta=2$, the fixed point
of the Legendre transform.\footnote{For the stiff fluid, cf.\ \cite{Li,C2,F}.}
\section{The additional law of conservation/production}
\setcounter{equation}0
The purpose of this section is to discuss the interpretation of the additional conservation law
\eqref{pseudomatterconservation}. To illustrate that there is a spectrum of possibilities, 
we begin with two particular cases.
The considerations and results of Section 2 hold in particular
for fluids of the form\footnote{Note that for any barotropic fluid in 
product form $e(n,\sigma)=\check e(n) r(\sigma)$ with non-constant $r$, $\check e(n)$ must be a power law.}
\be\label{productform}
e(n,\sigma)=n^{\gamma-1} r(\sigma),\quad 1<\gamma<2,
\ee
which includes ideal gases of vanishing or negligible particle mass,
corresponding to 
\be\label{mig}
r(\sigma)=k\exp(\sigma/c_v)
\ee
as well as ``double $\gamma$-law'' gases\footnote{These gases have $\rho=kn^\gamma\sigma^\gamma$.
An example is pure radiation, $\gamma=4/3$.},
for which 
\be\label{doublegammalaw}
r(\sigma)=k\sigma^\gamma.
\ee
\begin{lemma}
(i) For double $\gamma$-law gases \eqref{doublegammalaw} the index $f$ is a constant multiple of 
the temperature $\theta$ and the quantity $\nu$ introduced in Theorem \ref{ThmBaro} is a constant 
multiple of the entropy density $n\sigma$; the additional conservation law \eqref{addclaw} 
is that of entropy, 
\eqref{entropyconservation}.
(ii) These three assertions are wrong for the massless ideal gases \eqref{mig}. 
\end{lemma}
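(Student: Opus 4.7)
The plan is direct computation, exploiting Theorem~\ref{ThmBaro}. For every fluid of the product form \eqref{productform} one has $\rho=ne=n^\gamma r(\sigma)$ and $p=n^2 e_n=(\gamma-1)\rho$, so the fluid is barotropic with $\hat\rho(p)=p/(\gamma-1)$. Then \eqref{odef}$_1$ reduces to $f'(p)/f(p)=(\gamma-1)/(\gamma p)$, giving (up to the harmless multiplicative constant) $f(p)\propto p^{(\gamma-1)/\gamma}$, and consequently $\nu(p)=(\rho+p)/f(p)\propto p^{1/\gamma}$. These identities are common to both sub-cases, and I would establish them first.

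For (i), substituting $r(\sigma)=k\sigma^\gamma$ yields $\rho=k(n\sigma)^\gamma$, so $n\sigma$ is pinned down by $\rho$ (hence by $p$) alone. Routine arithmetic then gives $\theta=n^{\gamma-1}r'(\sigma)=k\gamma(n\sigma)^{\gamma-1}\propto \rho^{(\gamma-1)/\gamma}\propto f(p)$ and $n\sigma=(\rho/k)^{1/\gamma}\propto p^{1/\gamma}\propto\nu(p)$. The two claimed constants of proportionality fall out explicitly, and \eqref{pseudomatterconservation} is an overall constant multiple of \eqref{entropyconservation}.

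For (ii) I would argue negatively. With $r(\sigma)=k\exp(\sigma/c_v)$ one computes $\theta=e_\sigma=e/c_v=\rho/(c_v n)=p/(c_v(\gamma-1)n)$, which depends explicitly on $n$. Since for this fluid, at fixed $p$, the density $n$ (equivalently $\sigma$) still varies freely over the admissible states, $\theta$ is not a function of $p$ alone and therefore cannot be a constant multiple of $f(p)$. The same free variation of $n$ shows $n\sigma=c_v n\ln(\rho/(kn^\gamma))$ is not a function of $p$ alone, so not a constant multiple of $\nu(p)$. For the third assertion I would rewrite $\nu\propto p^{1/\gamma}\propto(kn^\gamma e^{\sigma/c_v})^{1/\gamma}\propto n\exp(\sigma/(\gamma c_v))$, so the density of \eqref{pseudomatterconservation} is a constant multiple of $n\exp(\sigma/(\gamma c_v))U^\beta$, which is not proportional to the entropy density $n\sigma U^\beta$ (the ratio $\exp(\sigma/(\gamma c_v))/\sigma$ is manifestly $\sigma$-dependent).

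The only genuinely subtle point is part (ii)(c): for smooth solutions of the five-field system \eqref{energmomentumconservation}--\eqref{matterconservation} both $\nu U^\beta$ and $n\sigma U^\beta$ happen to be conserved, so the two laws cannot be distinguished by bare validity. The distinction has to be made at the level of the state-function densities themselves, and this is where the non-constancy of $\exp(\sigma/(\gamma c_v))/\sigma$ does the real work; the rest of the proof is bookkeeping once $f$ and $\nu$ have been extracted from \eqref{odef}.
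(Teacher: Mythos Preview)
Your argument is correct and follows essentially the same route as the paper: both derive, for the general product form \eqref{productform}, the expressions $f\sim n^{\gamma-1}r(\sigma)^{1-1/\gamma}$ and $\nu\sim n\,r(\sigma)^{1/\gamma}$ (you reach these via the intermediate step $f\propto p^{(\gamma-1)/\gamma}$, $\nu\propto p^{1/\gamma}$, which is the same computation in a different parameterization), and then specialize to \eqref{doublegammalaw} and \eqref{mig}. Your treatment of part~(ii), in particular the care taken with assertion~(c) about distinguishing the two conservation laws at the level of the densities rather than mere validity, is more explicit than the paper's terse ``both relations in \eqref{sims} are wrong,'' but the substance is identical.
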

\begin{proof}
(i) The product form \eqref{productform}
implies 
\be\label{fnu}
f\sim n^{\gamma-1}\left(r(\sigma)\right)^{1-1/\gamma},
\quad
\nu\sim n (r(\sigma))^{1/\gamma},
\ee
and in case of \eqref{doublegammalaw}
\be\label{sims}
r(\sigma)^{1/\gamma}\sim\sigma
\quad\text{and}\quad
\theta
=e_\sigma(n,\sigma)
\sim
n^{\gamma-1}r(\sigma)^{1-1/\gamma}.
\ee
(ii) Both relations in \eqref{sims} are wrong in case of \eqref{mig}.
\end{proof}
Easy as it is to obtain now, the following general result seems particularly interesting.
\begin{theo}
For isentropic fluids \eqref{eisentropic}, the quantity $\nu$ can be chosen as
the particle number density, 
$$\nu(p(n))=n,$$
and the additional conservation law \eqref{pseudomatterconservation}
is that of matter, \eqref{matterconservation}.
\end{theo}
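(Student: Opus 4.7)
The plan is to read off the claim as an immediate corollary of the preceding Lemma (which identifies the index $f$ with the specific enthalpy $h$ in the isentropic case) combined with either of the two equivalent formulas that Theorem \ref{ThmBaro}(ii) gives for $\nu$. No new machinery is needed; the entire content is a one-line computation plus a substitution into \eqref{pseudomatterconservation}.

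Concretely, I would proceed as follows. For an isentropic fluid \eqref{eisentropic} the Lemma yields $f(p(n))=h(n)=(\rho(n)+p(n))/n$. Substituting this into the first expression for $\nu$ in \eqref{nuis} gives
$$
\nu(p(n))=\frac{\hat\rho(p(n))+p(n)}{f(p(n))}=\frac{\rho(n)+p(n)}{h(n)}=n,
$$
which is the asserted identification. As a consistency check, one can verify this via the second formula $\nu=1/f'(p)$: the chain of identities established in the proof of the Lemma already shows $dh/dp=1/n$, so $f'(p(n))=h'(n)/p'(n)=1/n$ and hence $\nu(p(n))=n$ again.

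With $\nu\equiv n$, the additional conservation law \eqref{pseudomatterconservation} furnished by Theorem \ref{ThmBaro}(ii) reads $\partial_\beta(nU^\beta)=0$, which is exactly \eqref{matterconservation}. There is no real obstacle to overcome; the only point worth highlighting conceptually is that in the general five-field theory matter conservation is postulated as an independent equation, whereas in the isentropic four-field reduction it emerges automatically as the symmetrization-induced extra conservation law associated with the Godunov variables $\Upsilon_\alpha=U_\alpha/f$. This is the conceptually satisfying parallel to the thermobarotropic situation recalled after \eqref{GodVarGen}, where the extra law is entropy conservation \eqref{entropyconservation}.
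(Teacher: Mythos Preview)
Your proof is correct and follows essentially the same route as the paper: invoke the Lemma's identification $f=h$ and substitute into \eqref{nuis} to obtain $\nu=(\rho+p)/h=n$. The paper's proof is a single line; your additional consistency check via $\nu=1/f'(p)$ and the concluding conceptual remarks are sound but optional elaborations.
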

\begin{proof}
In view of \eqref{fish}, 
equation \eqref{nuis} implies
$$ %\be
\nu=\frac{\rho+p}h=n.
$$ 
\end{proof}
In other words, isentropic fluids have the property that the conservation of matter,
\eqref{matterconservation},
is \emph{implied}\ by that of energy-momentum, \eqref{energmomentumconservation}.

{\bf Remark 4.}
Hawking and Ellis\footnote{While these authors do not distinguish between `barotropic' and 
`isentropic' (at least  at the time when they wrote \cite{HE}), we here stick to the above 
definitions of the two notions.} point out that for arbitrary barotropic fluids, one can ``introduce''  
a ``conserved quantity'' and an ``internal energy'' (\cite{HE}, p.\ 70, l.\ 6,7 from below,
$\rho$ and $\epsilon$ in their notation). Identical (though not derived there) with our $\nu$,
in particular this conserved quantity is thus \emph{not}\ always the matter 
density\footnote{The authors
didn't claim it was ...  .} $n$, but deviates from it by a generically 
non-constant factor. 
%That factor is constant for isentropic fluids. Still, one does not really
%have mass conservation for flows of isentropic fluids obtained as weak solutions of 
%the conservation laws of energy-momentum. To this aspect we turn now.

We now turn to the fact that the additional conservation law ``for'' $\nu$ indeed
holds only for smooth solutions of the original system (which here is the four-field theory 
\eqref{energmomentumconservation} by itself). For the five-field context, 
this phenomenon is well-known for the entropy law, which is replaced, 
in the presence of shock waves,
%under natural assumptions\footnote{These include 
%the positive-temperature requirement $\theta=e_\sigma(n,\sigma)>0$ (strict non-isentropicity).}, 
by the inequality
\be\label{entropyproduction}
{\partial \over \partial x^\beta}\left(n\sigma U^{\beta}\right)>0,
\ee
that expresses the \emph{second law of thermodynamics}. (See \cite{I,RS}, in analogy to the non-relativistic 
setting, cf., e.g., \cite{Da}.)

\begin{theo}\label{ThmEntropyproduction}
Assume that for the Euler equations
\eqref{energmomentumconservation},\eqref{invstresstensor}
for a barotropic fluid \eqref{barotropic},\eqref{SSTL},
the acoustic mode is genuinely nonlinear. Then any Lax shock 
solving \eqref{energmomentumconservation}
satisfies the `production law'
\be\label{nuproduction}
{\partial \over \partial x^\beta}
\left(\nu U^\beta\right)>0
\ee
in the sense of distributions, with $\nu$ from  \eqref{nuis}.
\end{theo}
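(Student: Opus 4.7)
The strategy is to apply, in the covariant setting of Theorem~\ref{ThmBaro}, the Lax--Friedrichs--Boillat principle: a genuinely nonlinear Lax shock of a system of conservation laws endowed with a convex entropy extension dissipates that entropy in a definite direction. By Theorem~\ref{ThmBaro}(ii), $\nu U^\beta$ is precisely the flux of the additional law \eqref{addclaw} associated with the convex covariant density structure $(X,\Upsilon)$ supplied by that theorem, so \eqref{nuproduction} is exactly the distributional entropy inequality for this extension.

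The first step is to write the Rankine--Hugoniot condition $[T^{\alpha\beta}]N_\beta = 0$ at a piecewise-smooth shock hypersurface $\Sigma$ with unit normal $N_\beta$ (spacelike, by the causality condition \eqref{SSTL}), to fix a pre-shock state $(U_-^\alpha,p_-)$, and to parameterize the acoustic Hugoniot locus through this state by a scalar strength $\epsilon$. Setting
\[
\mathcal E(\epsilon)\equiv [\nu U^\beta](\epsilon)\, N_\beta(\epsilon),
\]
the claim \eqref{nuproduction} reduces, in the sense of distributions, to $\mathcal E(\epsilon)>0$ on the Lax-admissible branch, since $\partial_\beta(\nu U^\beta) = \mathcal E\,\delta_\Sigma$.

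The second step is the classical local expansion of $\mathcal E$ at $\epsilon=0$. Using that the Hugoniot and centered rarefaction curves through a common base point osculate to second order (a direct consequence of the fact that $\nu U^\beta$ is conserved along smooth solutions, Theorem~\ref{ThmBaro}(ii)), one obtains $\mathcal E(0)=\mathcal E'(0)=\mathcal E''(0)=0$ while
\[
\mathcal E'''(0) = \kappa\, G_-,
\]
where $G_-$ is the genuine-nonlinearity coefficient of the acoustic mode at the pre-shock state and $\kappa>0$ is fixed by the definiteness asserted in \eqref{Hessianpositive}. The Lax condition selects a definite sign of $\epsilon$, so under the genuine-nonlinearity hypothesis $\mathcal E>0$ for small Lax-admissible shocks.

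The main obstacle is the extension from small to arbitrary amplitude, since the $\epsilon^3$-expansion alone does not give global positivity. I would handle this by a Dafermos-type monotonicity argument along the acoustic Hugoniot: differentiating the Rankine--Hugoniot equations in $\epsilon$ and contracting with the acoustic eigenvector, one shows that on the Lax branch $d\mathcal E/d\epsilon$ keeps a fixed sign consistent with that of $\mathcal E'''(0)\,\epsilon^2$, so $\mathcal E(\epsilon)>0$ persists for every $\epsilon$ for which the Lax condition holds. Covariance of the conclusion is then automatic, as $\mathcal E$ depends only on the intrinsic geometry of $\Sigma$ and on the definite quadratic form of \eqref{Hessianpositive}.
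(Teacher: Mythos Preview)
Your proposal is sound in its structural insight and in the small-shock part, and in fact your opening observation --- that $\nu U^\beta$ is precisely the flux of the additional law \eqref{addclaw} attached to the convex covariant density structure of Theorem~\ref{ThmBaro} --- is exactly the content of the paper's own argument. The paper does not give an independent proof of Theorem~\ref{ThmEntropyproduction} at all: it simply points out that the quantity Ruggeri and Strumia call $\eta$ in their ``Statement~II'' coincides here with the production $\partial_\beta(\nu U^\beta)$, so that Theorem~\ref{ThmEntropyproduction} is an immediate corollary of \cite{RS}. Your first paragraph is therefore already the whole proof as the paper presents it.

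Where your route diverges is in attempting to re-derive Statement~II itself. The cubic expansion $\mathcal E(\epsilon)=\kappa G_-\epsilon^3+O(\epsilon^4)$ with $\kappa>0$ coming from \eqref{Hessianpositive} is standard and correct, and handles weak Lax shocks. The large-amplitude step is where your sketch is thinnest: the claim that $d\mathcal E/d\epsilon$ ``keeps a fixed sign'' on the Lax branch needs the precise Dafermos identity (essentially $\dot{\mathcal E}=\dot\sigma\,[\![\Upsilon]\!]^\top H[\![\Upsilon]\!]$ with $H$ the definite Hessian from \eqref{Hessianpositive}), together with the observation that genuine nonlinearity forces $\dot\sigma$ to have one sign along the admissible branch. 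That identity is exactly what the convexity-based argument in \cite{RS} supplies in one stroke, for shocks of arbitrary strength. So your approach is a legitimate alternative, but it re-proves the cited result rather than bypassing it, and the monotonicity step should be stated with the explicit identity rather than asserted; without it, the passage from small to large shocks is a genuine gap.
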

On shock waves in relativistic fluid dynamics,  
some results can be found in 
\cite{T,I,ST,STKochel,FR,FT1}.\footnote{This list is by far not exhaustive.}
For the mathematical notions of shock wave and genuine nonlinearity, see \cite{Lx}. 
For relativistic barotropic fluids, genuine nonlinearity is equivalent to the condition 
\be\label{GNL}
(\rho+\hat p(\rho))\hat p''(\rho)+2(1-\hat p'(\rho))\hat p'(\rho)>0,
\ee
where $\hat p=\hat\rho^{-1}$, 
holding for all $\rho>0$; cf.\ \cite{CB69,Bo73,ST}.

Theorem \ref{ThmEntropyproduction} is an immediate corollary of 
``Statement II'' in Section 6 of \cite{RS}. 
One just has to note that our development in the previous section of the 
present paper implies that what Ruggeri and Strumia denote by $\eta$ is,
in the present case of application, the `production' 
$(\partial/\partial x^\beta)(\nu U^\beta)$ at the shock wave! 

Theorems 2 and 3 readily yield 
\begin{coro}
Assume that for the Euler equations
\eqref{energmomentumconservation},\eqref{invstresstensor}
for a causal isentropic fluid \eqref{eisentropic},\eqref{SSTL},
the acoustic mode is genuinely nonlinear. Then any Lax shock 
solving \eqref{energmomentumconservation}
is accompanied by strictly positive `\emph{matter production}',
\be\label{massproduction}
{\partial \over \partial x^\beta}
\left(n U^\beta\right)>0
\ee
in the sense of distributions.
\end{coro}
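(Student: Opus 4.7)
The plan is essentially to combine the two preceding theorems, since they have been set up precisely so that this corollary falls out with almost no work.

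First I would note that every isentropic fluid in the sense of \eqref{eisentropic} is in particular a causal barotropic fluid in the sense of \eqref{barotropic},\eqref{SSTL}: eliminating $n$ between $\rho=ne(n)$ and $p=n^2e'(n)$ produces a monotone relation $p=\hat\rho(p)$ (written here with a slight abuse of notation as in the paper), and the hypothesis that the fluid is causal with genuinely nonlinear acoustic mode is built into the statement we want to prove. Thus the hypotheses of Theorem \ref{ThmEntropyproduction} are satisfied, and any Lax shock solving \eqref{energmomentumconservation} obeys the production inequality
\[
\frac{\partial}{\partial x^\beta}\left(\nu U^\beta\right)>0
\]
in the sense of distributions, where $\nu$ is the quantity associated to the barotropic fluid by \eqref{nuis}.

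Next I would invoke Theorem 3, which identifies this $\nu$ with the particle number density in the isentropic case, $\nu(p(n))=n$. Substituting this identification into the inequality above yields \eqref{massproduction} directly.

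There is essentially no obstacle here; the only point requiring any care is to verify that the identifications made in Section 2 (index $f$ as enthalpy, $\nu$ as $n$) are compatible with the hypotheses of Theorem \ref{ThmEntropyproduction} — which is immediate, because those identifications were performed under exactly the assumption \eqref{eisentropic}, while the barotropic structure with \eqref{SSTL} and the genuine nonlinearity condition \eqref{GNL} are by hypothesis. Hence the corollary follows by direct concatenation of Theorems 2, 3, and \ref{ThmEntropyproduction}, with no further computation needed.
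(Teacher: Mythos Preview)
Your proposal is correct and matches the paper's own argument exactly: the paper simply states ``Theorems 2 and 3 readily yield'' the corollary, i.e., Theorem~\ref{ThmEntropyproduction} gives $\partial_\beta(\nu U^\beta)>0$ for barotropic fluids, and Theorem~2 identifies $\nu=n$ in the isentropic case. (Note a small numbering slip: the $\nu=n$ result is Theorem~2, not Theorem~3, and \texttt{ThmEntropyproduction} \emph{is} Theorem~3.)
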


{\bf Remark 5.} Shock waves\footnote{though, like here, often understandable from their 
properties as 
ingredients of weak solutions to first-order systems of conservation laws without explicit 
reference to the dissipation mechanism(s) (cf.\ \cite{Da})} are a phenomenon 
of dissipation, and the production laws 
\eqref{entropyproduction},\eqref{nuproduction},\eqref{massproduction}
hold also when the conservation laws \eqref{energmomentumconservation} are augmented
by proper dissipation terms. This will be carried out elsewhere.
%\newpage

\section{Ideal gases}
\setcounter{equation}0
In this section we return to the context originally studied in particular by Ruggeri and Strumia
in \cite{RS} and show an explicit formula for a particular example, namely the ideal gas. 

Assume that $\rho=ne(n,\sigma), p=n^2e_n(n,\sigma)$ where 
\be\label{IGEOS}
e(n,\sigma)=m+kn^{\gamma-1}\exp(\sigma/c_v)
\ee
with $1<\gamma\le 2$, and write $$T^{4\beta}\equiv N^\beta\equiv nU^\beta.$$

\begin{theo}\label{GenIG}
%To every barotropic fluid $\rho=\hat\rho(p)$,
A %There exists an essentially unique %XXX\footnote{$f$ is unique up to a positive multiplicative constant.
%Once $f$ is normalized, say by $f(1)=1$, $X$ is determined up to an additive constant.}
function
$X(\psi_\alpha,\psi)=\hat X(\theta,\psi)$ 
can be explicitly determined such that when written in the variables 
$$
\psi_\alpha=\frac{U_\alpha}\theta \quad\text{and}\quad\psi_4=\psi=\frac{\rho+p}{\theta n}-\sigma, 
%f=\int_1\frac{dp}{\hat\rho(p)+p},
$$
the equations 
$$
\frac{\partial}{\partial x^\beta} T^{a\beta}=0,\quad a=0,\ldots,4,
$$
are symmetric hyperbolic, with
\be\label{TIG}
T^{a\beta}=\frac{\partial^2 X}{\partial \psi_a\partial\psi_\beta}.
\ee
\end{theo}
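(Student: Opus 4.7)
The plan is to mimic the scalar-potential construction of Theorem \ref{ThmBaro} by making the ansatz $X(\psi_\alpha,\psi)=\hat X(\theta,\psi)$ with $\theta=(-\psi_\alpha\psi^\alpha)^{-1/2}$, reducing the problem to determining a single real function $\hat X$ of two scalar arguments. A chain-rule computation using $\partial\theta/\partial\psi_\alpha=\theta^2 U^\alpha$ and $U^\beta=\theta\psi^\beta$ yields
\[
\frac{\partial^2 X}{\partial\psi_\alpha\partial\psi_\beta}=\bigl(\theta^4\hat X_{\theta\theta}+3\theta^3\hat X_\theta\bigr)U^\alpha U^\beta+\theta^3\hat X_\theta\,g^{\alpha\beta},\qquad \frac{\partial^2 X}{\partial\psi\,\partial\psi_\beta}=\theta^2\hat X_{\theta\psi}\,U^\beta.
\]
Matching these to $T^{\alpha\beta}=(\rho+p)U^\alpha U^\beta+p\,g^{\alpha\beta}$ and $T^{4\beta}=nU^\beta$ singles out the PDE $\hat X_\theta=p/\theta^3$; the remaining conditions $\theta^4\hat X_{\theta\theta}=\rho-2p$ and $\theta^2\hat X_{\theta\psi}=n$ are equivalent to the thermodynamic identities $\theta p_\theta=\rho+p$ and $p_\psi=n\theta$, which I would verify once for all by combining the Gibbs relation $dh=\theta\,d\sigma+(1/n)\,dp$ with $\psi=h/\theta-\sigma$ to obtain $d\psi=dp/(n\theta)-(h/\theta^2)\,d\theta$.

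The explicit formula then reduces to producing $p(\theta,\psi)$ for the ideal gas and integrating $p/\theta^3$ in $\theta$ at fixed $\psi$. From $e=m+kn^{\gamma-1}e^{\sigma/c_v}$ one has $\theta=(k/c_v)n^{\gamma-1}e^{\sigma/c_v}$ and, using $(\rho+p)/n=m+\gamma c_v\theta$, also $\psi=m/\theta+\gamma c_v-\sigma$. Inverting these two relations yields $n(\theta,\psi)$ in closed form, and with $C:=(c_v/k)^{1/(\gamma-1)}e^{-\gamma/(\gamma-1)}$ one finds
\[
p(\theta,\psi)=(\gamma-1)c_v C\,\theta^{\gamma/(\gamma-1)}\exp\!\Bigl(\tfrac{\psi}{c_v(\gamma-1)}-\tfrac{m}{c_v(\gamma-1)\theta}\Bigr).
\]
For the massless case $m=0$ the integral $\int p\,\theta^{-3}d\theta$ is elementary: a single power of $\theta$ times $\exp(\psi/(c_v(\gamma-1)))$, with a logarithm replacing the power at the degenerate exponent $\gamma=2$. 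For $m>0$ the substitution $u=m/(c_v(\gamma-1)\theta)$ converts the $\theta$-integral into an incomplete-gamma expression $\Gamma(\cdot,u)$, giving the explicit formula claimed.

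Finally, symmetric hyperbolicity amounts to definiteness of the Hessian $\partial^2 X/\partial\psi_a\partial\psi_g$ along every timelike direction $T_\beta$. Its block structure, already read off from the formulas above and supplemented by $\hat X_{\psi\psi}=\int(p_{\psi\psi}/\theta^3)\,d\theta$, reduces the positivity question to a finite algebraic check that follows, as in the barotropic case, from positivity of $p$, $n$, $\theta$, $c_v$ together with the hypothesis $1<\gamma\le 2$. I expect the main technical obstacle to be not this convexity check, which is routine once the generating function is in hand, but rather the bookkeeping of the explicit $\hat X$ across the two regimes $m=0$ and $m>0$ and the degenerate exponent $\gamma=2$, together with fixing the integration constants in $\psi$ compatibly with the matching identities.
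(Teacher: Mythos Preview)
Your chain-rule computations and the matching conditions $\theta^3\hat X_\theta=p$, $\theta p_\theta=\rho+p$, $p_\psi=n\theta$ are all correct, and your derivation of $p(\theta,\psi)$ for the ideal gas agrees exactly with what the paper obtains. The difference lies in what object is being called $\hat X$.

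The paper does \emph{not} integrate $p/\theta^3$. In its proof the displayed relations
\[
T^{\alpha\beta}=\theta^3\hat X_\theta\,\psi^\alpha\psi^\beta+\hat X\,g^{\alpha\beta},\qquad N^\beta=\hat X_\psi\,\psi^\beta
\]
are the \emph{first} derivatives $\partial X^\beta/\partial\psi_a$ of the vector potential $X^\beta=\hat X(\theta,\psi)\,\psi^\beta$, not the second derivatives of a scalar. Matching then forces $\hat X=p$, $\theta\hat X_\theta=\rho+p$, $\hat X_\psi=n\theta$, and the paper's ``explicit $\hat X$'' is literally the pressure $p(\theta,\psi)$, which is elementary for all $m\ge0$. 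There is thus a notational mismatch between the paper's statement (a scalar $X$ with $T^{a\beta}=\partial^2X/\partial\psi_a\partial\psi_\beta$) and its proof (which works at the level of the Ruggeri--Strumia 4-potential $X^\beta$).

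Your route is the one that actually produces the scalar $X$ asserted in the statement: you take the further antiderivative $\hat X=\int p\,\theta^{-3}\,d\theta$, which is why you are led to incomplete-gamma functions when $m>0$ and to a logarithm at $\gamma=2$. That is a genuine additional step the paper skips. What each approach buys: the paper's choice gives an elementary closed-form generating function (the pressure) immediately and suffices for symmetric hyperbolicity, since only the 4-potential $X^\beta$ is needed in Definition~\ref{GV4P}; your approach is more faithful to the literal theorem statement and yields the integrated scalar potential, at the cost of a non-elementary primitive in the massive case. If your goal is to match the paper's explicit formula, you can stop once you have $p(\theta,\psi)$ and present $X^\beta=p\,\psi^\beta$ as the 4-potential.
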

\begin{proof}
Equation \eqref{TIG} yields
$$ %\label{idealgas}
T^{\alpha\beta}=\theta^3\frac{\partial \hat X(\theta,\psi)}{\partial \theta}\psi^\alpha\psi^\beta
+\hat X(\theta,\psi)g^{\alpha\beta},
\quad
N^\beta=\frac{\partial \hat X(\theta,\psi)}{\partial \psi}\psi^\beta.
$$
The ideal gas equation of state \eqref{IGEOS} leads to 
\begin{eqnarray*}
\hat X(\theta,\psi)
&=&
(\gamma-1)c_v\theta
\left(\frac k{c_v\theta}\right)^\frac 1{1-\gamma}
\exp\left(\frac 1{c_v(1-\gamma)}\frac m\theta\right)
\exp\left(\frac 1{c_v(1-\gamma)}(-\psi{+\gamma c_v})\right)\\
&=&
\tilde k \theta^{1/(1-1/\gamma)}
\exp\left(\frac 1{c_v(\gamma-1)}(\psi-\frac m\theta-\gamma c_v)\right).
\end{eqnarray*}

The symmetric hyperbolicity follows from the fact, mentioned already in Section 2, that 
$\psi_a, a=0,...,4$, are the Godunov variables \cite{RS}.
\end{proof}

\small


\begin{thebibliography}{99}
 
\bibitem{Bo} G. Boillat: 
{\it Sur l'existence et la recherche d'\'equations de 
conservation suppl\'ementaires pour les syst\`emes hyperboliques}. 
C. R. Acad.\ Sci.\ Paris S\'er.\ A 278 (1974), 909--912. 

\bibitem{Bo73} G. Boillat:
{\it Covariant disturbances and exceptional waves}. J.\ Math.\ Phys.\ {\bf 14} (1973),
973--976.

\bibitem{C2} Y. Choquet-Bruhat:
General Relativity and the Einstein Equations. 
Oxford University Press, Oxford, 2009. 

\bibitem{CB69} Y. Choquet-Bruhat:
{\it Ondes asymptotiques et approch\'ees pour des syst\`emes d'\'equations
aux d\'eriv\'ees partielles  non lin\'eaires}.
J.\ Math.\ Pures Appl.\ {\bf 48} (1969), 117--158. 
Cf.\ p.\ 149.

\bibitem{Da}
C. M. Dafermos: Hyperbolic Conservation Laws in Continuum Physics. Fourth edition. 
Grundlehren der Mathematischen Wissenschaften 
[Fundamental Principles of Mathematical Sciences], 325. Springer-Verlag, Berlin, 2016.

%\bibitem{eck}  C. Eckart: {\it The thermodynamics of irreversible processes. 3: Relativistic theory of the simple
%fluid}, Phys.\ Rev.\ {\bf 58} (1940), 919-924.

\bibitem{F} H. Freist\"uhler:
{\it Classical incompressible fluid dynamics as a limit of relativistic 
compressible fluid dynamics}.
Preprint.

\bibitem{FR} H. Freist\"uhler, M. Raoofi: 
{\it Stability of perfect-fluid shock waves 
in special and general relativity.} Class.\ Quantum Grav.\ {\bf 24} (2007), 4439--4455

%\bibitem{FT2} H. Freist\"uhler, B. Temple: 
%{\it Causal dissipation for the relativistic dynamics of ideal gases}.
%To appear in Proc.\ R. Soc.\ A.

\bibitem{FT1} H. Freist\"uhler, B. Temple:
{\it Causal dissipation and shock profiles in the relativistic fluid dynamics of pure radiation}, 
Proc.\ R. Soc.\ A {\bf 470} (2014), 20140055.


\bibitem{Fr} K. O. Friedrichs: {\it Symmetric hyperbolic linear differential equations}. 
Comm.\ Pure Appl.\ Math. {\bf 7} (1954). 
345--392.


\bibitem{LxFr} K. O. Friedrichs, P.D. Lax: {\it Systems of conservation equations with a 
convex extension}. Proc.\ Nat.\ Acad.\ Sci.\ U.S.A. {\bf 68} (1971), 1686--1688.

\bibitem{GL} R. Geroch, L. Lindblom:
{\it Dissipative relativistic fluid theories of divergence type},
Phys.\ Rev.\ D {\bf 41} (1990), 1855--1861.

\bibitem{Go} S. K.   Godunov: {\it An interesting class of quasilinear systems.} 
Dokl.\ Akad.\ Nauk SSSR {\bf 139} (1961), 525--523. 

\bibitem{HE} S. W. Hawking, G. F. R. Ellis: The Large Scale Structure of Space-Time. 
Cambridge Monographs on Mathematical Physics, No. 1. 
Cambridge University Press, London-New York, 1973. % xi+391 pp. 

\bibitem{I} W. Israel:
{\it Relativistic theory of shock waves.}
Proc.\ Roy.\ Soc.\ London Ser.\ A {\bf 259} (1960/1961), 129--143. 

%\bibitem{LL}  L. D. Landau, E. M. Lifshitz: Fluid Mechanics. Pergamon Press, London 1959, Section 127.
%Original Russian edition: Moscow 1953.

\bibitem{Lx} P.D. Lax: 
{\it Hyperbolic systems of conservation laws. II}. 
Comm.\ Pure Appl.\ Math.\ {\bf 10} (1957), 537--566.

\bibitem{Li} A. Lichnerowicz: Th\'eories relativistes de la gravitation et de 
l'\'electromagn\'etisme. Relativit\'e g\'en\'erale et th\'eories unitaires.  
Masson et Cie., Paris, 1955.

\bibitem{MR} I. M\"uller, T. Ruggeri: 
Rational Extended Thermodynamics. Second edition.  
Springer Tracts in Natural Philosophy, 
37. Springer-Verlag, New York, 1998.

\iffalse
\bibitem{Ri} G. F. B. Riemann: {\it \"Uber die Fortpflanzung ebener Luftwellen
von endlicher Schwingungsweite}.
G\"ottinger Nachrichten Nr.\ 19 (1859), 145-164.
(Engl. trl. in \cite{RCW}, 145--165.)

\bibitem{Ri2} G. F. B. Riemann: {\it Selbstanzeige der vorstehenden Abhandlung}.
G\"ottinger Nachrichten Nr.\ 19 (1859), 165-167.
(Engl. trl. in in \cite{RCW}, 77--78.)

\bibitem{RCW} G. F. B. Riemann: Collected Papers. Kendrick
Press 2004.
\fi
%\bibitem{R2}
%T. Ruggeri: Entropy principle and relativistic extended thermodynamics: 
%global existence of smooth solutions and stability of equilibrium state. 
%Nuovo Cimento Soc.\ Ital.\ Fis.\ B {\bf 119} (2004),  809--821.

%\bibitem{R1}
%T. Ruggeri: Relativistic extended thermodynamics: general 
%&assumptions and mathematical procedure. Relativistic fluid dynamics 
%(Noto, 1987), 269--277, Lecture Notes in Math., 1385, Springer, Berlin, 1989.

\bibitem{RS}
T. Ruggeri, A. Strumia: 
{\it 
Main field and convex covariant density for quasilinear hyperbolic systems. 
Relativistic fluid dynamics. }
Ann.\ Inst.\ H. Poincar\'e Sect. A (N.S.) {\bf 34} (1981),65--84.

%\bibitem{Sm}
%J. Smoller: Shock waves and reaction-diffusion equations. Second edition. Grundlehren der 
%Mathematischen Wissenschaften [Fundamental Principles of Mathematical Sciences], 258. 
%Springer-Verlag, New York, 1994.

%\bibitem{W} S. Weinberg: {\em Gravitation and Cosmology: Principles and Applications of the
%General Theory of Relativity}, John Wiley \& Sons, New York, 1972.

\bibitem{STKochel} J. Smoller, B. Temple: {\it Shock-wave solutions of the Einstein equations: 
the Oppen\-heimer-Snyder model of gravitational collapse extended to the case of non-zero pressure}. 
Arch.\ Rational Mech.\ Anal.\ {\bf 128} (1994),  249--297.


\iffalse
Shock wave solutions of the Einstein equations: 
a general theory with examples. 
In Advances in the theory of shock waves (eds. H. Freist\"uhler, A Szepessy). Progress
in Nonlinear Differential Equations and their Applications, no. 47, pp. 105--258. Boston, MA:
Birkh\"auser Boston 2001.
\fi

\bibitem{ST} J. Smoller, B. Temple:
{\it Global solutions of the relativistic Euler equations.}
Comm.\ Math.\ Phys.\ {\bf 156} (1993), 67--99.

\bibitem{T} A. H. Taub:
{\it Relativistic Rankine-Hugoniot equations.}
Phys.\ Rev.\ {\bf 74} (1948). 328--334. 

\bibitem{W} S. Weinberg: {\em Gravitation and Cosmology: Principles and Applications of the
General Theory of Relativity}. John Wiley \& Sons, New York, 1972.



\end{thebibliography}
\end{document}